\newcommand{\DC}{\mathcal{C}}
\newcommand{\lrs}{\mathit{lrs}}
\newcommand{\lrslrs}{\mathit{lrs}^{2}}
\newcommand{\bord}{\mathit{bord}}
\newcommand{\WAQ}{\mathsf{WAQ}}
\newcommand{\STree}{\mathsf{STree}}
\newcommand{\str}{\mathsf{str}}
\newcommand{\len}{\mathsf{strlen}}
\newcommand{\leaf}{\mathsf{leaf}}
\newcommand{\occ}{\mathit{occ}}
\newcommand{\OC}{\mathsf{OC}}
\newcommand{\out}{\mathsf{output}}
\newtheorem{theorem}{Theorem}
\newtheorem{corollary}{Corollary}
\newtheorem{lemma}{Lemma}
\newtheorem{observation}{Observation}
\newtheorem{proposition}{Proposition}
\begin{document}
\title{Online and Offline Algorithms for Counting Distinct Closed Factors via Sliding Suffix Trees}
\author[1]{Takuya Mieno}
\author[2]{Shun Takahashi}
\author[2]{Kazuhisa Seto}
\author[2]{Takashi Horiyama}
\affil[1]{The University of Electro-Communications, Japan}
\affil[ ]{\texttt{tmieno@uec.ac.jp}}
\affil[2]{Hokkaido University, Japan}
\affil[ ]{\texttt{\{seto,horiyama\}@ist.hokudai.ac.jp}}
\date{}
\maketitle              \begin{abstract}
  A string is said to be closed if its length is one, or
  if it has a non-empty factor that occurs both as a prefix and as a suffix of the string,
  but does not occur elsewhere.
  The notion of closed words was introduced by [Fici, WORDS 2011].
  Recently, the maximum number of distinct closed factors occurring in a string was investigated
  by [Parshina and Puzynina, Theor. Comput. Sci. 2024],
  and an asymptotic tight bound was proved.
  In this paper, we propose two algorithms
  to count the distinct closed factors
  in a string $T$ of length $n$ over an alphabet of size $\sigma$.
  The first algorithm runs in $O(n\log\sigma)$ time using $O(n)$ space
  for string $T$ given in an online manner.
  The second algorithm runs in $O(n)$ time using $O(n)$ space
  for string $T$ given in an offline manner.
  Both algorithms utilize suffix trees for sliding windows.
\end{abstract}
\section{Introduction}\label{sec:intro}
String processing is a fundamental area in computer science,
with significant importance ranging from theoretical foundations to practical applications.
One of the most active areas of this field is the study of repetitive structures within strings,
which has driven advances in areas such as 
pattern matching algorithms~\cite{KnuthMP77,GalilS83} and
compressed string indices~\cite{BelazzouguiCGGK21,KempaP18,KociumakaNO24}.
Understanding repetitive structures in strings is important
for the advancement of information processing technology.
For surveys on these topics, see~\cite{CrochemoreIR09,Smyth13} and~\cite{Navarro21,Navarro21a}.
The concept of \emph{closed words}~\cite{Fici2011}
is a sort of such repetitive structures of strings.
A string is said to be \emph{closed} if its length is one, or
if it has a non-empty factor that occurs both as a prefix and as a suffix of the string,
but does not occur elsewhere\footnote{The notion of closed words is equivalent to those of return words~\cite{Durand98,GlenJWZ09} and periodic-like words~\cite{CarpiL01}.}.
For example, string $\mathtt{abaab}$ is closed because $\mathtt{ab}$ occurs both as a prefix and as a suffix, but does not occur elsewhere in $\mathtt{abaab}$.
Closed words have been studied primarily in the field of combinatorics on finite and infinite words~\cite{Vuillon01,BucciLL09,Fici2011,LucaF13,BadkobehFL15,fici2017open,ParshinaZ19,Badkobeh0FP22,ParshinaP24}.
Regarding the number of closed \emph{factors} (i.e., substrings) appearing in a string,
an asymptotic tight bound $\Theta(n^2)$ on the maximum number of distinct closed factors of a string is known~\cite{BadkobehFL15}.
More recently, Parshina and Puzynina refined this bound to $\sim \frac{n^2}{6}$ in 2024~\cite{ParshinaP24}.
Despite these progresses on the number of closed factors,
there is no non-trivial algorithm for computing the exact number of distinct closed factors of a given string to our knowledge.

In this paper, we present both online and offline algorithms for counting the number of distinct closed factors of a given string $T$ of length $n$.
The first counting algorithm is an online approach
performed in $O(n\log\sigma)$ time and $O(n)$ space,
where $\sigma$ is the number of distinct characters in the string.
The second counting algorithm is an offline approach
performed in $O(n)$ time and $O(n)$ space,
assuming $T$ is drawn from an integer alphabet of size $n^{O(1)}$.
We begin by characterizing the number of distinct closed factors of $T$
through the \emph{repeating suffixes} of some prefixes and factors of $T$.
Based on this characterization, we design an online algorithm that
utilizes Ukkonen's online suffix tree construction~\cite{Ukkonen95}, as well as suffix trees for a sliding window~\cite{FialaG89,Larsson96,Senft2005,LeonardarXiv}.
We then design a linear-time offline algorithm by simulating
sliding-window suffix trees within the static suffix tree of the entire string $T$.
This simulation is of independent interest,
as it has the potential to speed up sliding-window algorithms for strings in an offline setting.
Furthermore, we explore the enumeration of (distinct) closed factors in a string
and propose an algorithm
that combines our counting method with a geometric data structure
for handling points in the two-dimensional plane~\cite{BelazzouguiP16}, resulting a somewhat faster solution.

\paragraph*{Related work.}
Recent work has highlighted algorithmic advances
in the study of closed factors and related problems
over the past decade~\cite{BannaiIKLRRSW15,BadkobehBGIIIPS16,OCarray,AlzamelISS19,AlamroAIPSW20,Badkobeh0FP22,abs-2209-00271,Sumiyoshi2024}.
The line of algorithmic research on closed factors is initialized by Badkobeh et al,~\cite{BadkobehBGIIIPS14,BadkobehBGIIIPS16},
who addressed various problems related to factorizing a string into a sequence of closed factors and proposed efficient algorithms for these tasks.
In the domain of online string algorithms,
Alzamel et al.~\cite{AlzamelISS19} proposed an algorithm that computes closed factorizations in an online settings,
and more recently, Sumiyoshi et al.~\cite{Sumiyoshi2024} achieved a speedup of $\log \log n$ times in total execution time.
Additionally, a new concept, \emph{maximal closed substrings}~(MCS), was introduced in~\cite{Badkobeh0FP22},
along with an $O(n \log n)$-time enumeration algorithm~\cite{abs-2209-00271}.

\section{Preliminaries}\label{sec:prelim}
\subsection{Basic notations}
Let $\Sigma$ be an ordered \emph{alphabet}.
An element of $\Sigma$ is called a \emph{character}.
An element of $\Sigma^\star$ is called a \emph{string}.
The \emph{empty string}, denoted by $\varepsilon$, is the string of length $0$.
For a string $T \in \Sigma^\star$, the length of $T$ is denoted by $|T|$.
If $T = xyz$ for some strings $x,y,z\in\Sigma^\star$, we call $x$, $y$, and $z$ a \emph{prefix}, a \emph{factor}, and a \emph{suffix} of $T$, respectively.
A string $b \ne T$ is said to be a \emph{border} of $T$ if $b$ is both a prefix of $T$ and a suffix of $T$.
We denote by $\bord(T)$ the longest border of $T$.
Note that the longest border always exists since $\varepsilon$ is a border of any string.
For each $i$ with $1 \le i \le |T|$, we denote by $T[i]$ the $i$th character of $T$.
For each $i, j$ with $1 \le i \le j \le |T|$, we denote by $T[i.. j]$ the factor of $T$ that starts at position $i$ and ends at position $j$.
For convenience, we define $T[i.. j] = \varepsilon$ for any $i > j$.
For a strings $T$ and $S$, the set $\occ_T(S) = \{i\mid T[i.. i+|S|-1] = S\}$ of integers is said to be the \emph{occurrences} of $S$ in $T$.
If $\occ_T(S) \ne \emptyset$, we say that $S$ \emph{occurs} in $T$ as a factor.
A string $T$ is said to be \emph{closed} if there is a border of $T$ that occurs exactly twice in $T$.
If $|\occ_T(S)| = 1$, we say that $S$ is a \emph{unique} factor of $T$.
Also, if $|\occ_T(S)| \ge 2$, we say that $S$ is a \emph{repeating} factor of $T$.
We denote by $\lrs(T)$ the longest repeating suffix of string $T$.
Further we denote $\lrslrs(T) = \lrs(\lrs(T))$.

In what follows, we fix a non-empty string $T$ of arbitrary length $n$ over an alphabet $\Sigma$ of size $\sigma$.
This paper assumes the standard word RAM model with word size $\Omega(\log n)$.

\subsection{Suffix trees}\label{subsec:suftree}
The most important tool of this paper is a \emph{suffix tree} of string $T$~\cite{Weiner73}.
A suffix tree of $T$, denoted by $\STree(T)$, is a \emph{compact trie} for the set of suffixes of $T$.
Below, we summarize some known properties of suffix trees and define related notations: 
\begin{itemize}
  \item Each edge of $\STree(T)$ is labeled with a factor of $T$ of length one or more.
  \item Each internal node $u$, including the root, has at least two children, and
    the first characters of the labels of outgoing edges from $u$ are mutually distinct (unless $T$ is a unary string).
  \item For each node $v$ of $\STree(T)$, we denote by $\str_T(v)$ the string spelled out from the root to $v$.
    We also define $\len_T(u) = |\str_T(u)|$ for a node $u$.
  \item There is a one-to-one correspondence between leaves of $\STree(T)$ and unique suffixes of $T$.
    More precisely, for each leaf $\ell$, the string $\str_T(\ell)$ equals some unique suffix of $T$, and vice versa.
    Note that repeating suffixes of $T$ are not always represented by a node in $\STree(T)$.
\end{itemize}
Throughout this paper, we assume for convenience that the last character of $T$ is unique.
For each  $1 \le i \le n$, we denote by $\leaf_T(i)$ the leaf $\ell$ of $\STree(T)$ such that  $\str_T(\ell) = T[i.. n]$.

It is known that $\STree(T)$ for given string $T$ can be constructed in $O(n)$ time~\cite{Farach97}
if $\Sigma$ is \emph{linearly sortable}\footnote{A typical example of linearly sortable alphabets is an integer alphabet $\{1, 2, \ldots, n^c\}$ for some constant $c$. Any $n$ characters from the alphabet can be sorted in $O(n)$ time using radix sort.}, i.e.,
any $n$ characters from $\Sigma$ can be sorted in $O(n)$ time.
Additionally, when the input string $T$ is given in an \emph{online} manner,
$\STree(T)$ can be constructed in $O(n\log\sigma)$ time using Ukkonen's algorithm~\cite{Ukkonen95}.
In both algorithms, the resulting suffix trees are edge-sorted,
meaning that the first characters of the labels of outgoing edges from each node are (lexicographically) sorted.
Thus we assume that all suffix trees in this paper are edge-sorted.
\begin{theorem}[\cite{Ukkonen95}]\label{thm:ukkonen}
  For incremental $i = 1, 2, \ldots, n$,
  we can maintain the suffix tree of $T[1..i]$ and the length of 
  $\lrs(T[1.. i])$ in a total of $O(n \log \sigma)$ time.
\end{theorem}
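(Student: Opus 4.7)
The plan is to treat Ukkonen's online suffix tree construction as a black box for the first part of the claim, and to argue separately that the state already maintained by the algorithm encodes $|\lrs(T[1..i])|$ at the end of each phase. The $O(n\log\sigma)$ bound for incrementally building $\STree(T[1..i])$ is exactly the content of Ukkonen's original theorem, so I would cite that directly; only the longest-repeating-suffix length requires additional justification.

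The substantive observation is that Ukkonen's algorithm maintains an \emph{active point} --- a (possibly implicit) location in the current tree --- which, at the end of each phase $i$, marks the locus of the longest suffix of $T[1..i]$ that still has more than one occurrence in $T[1..i]$. This is exactly the locus of $\lrs(T[1..i])$. To see this, I would follow the extension rules: the suffixes that trigger Rule 2 and become explicit leaves during phase $i$ are precisely the suffixes that have just become unique in $T[1..i]$, while those handled implicitly (Rule 3 or not visited because of the ``show-stopper'' condition) are exactly the repeating suffixes of $T[1..i]$, with the active point sitting at the longest of them.

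Given this correspondence, I would maintain a single integer counter $d = |\lrs(T[1..i])|$ alongside the active point. Each elementary move the algorithm performs on the active point --- descending by one character along an active edge, crossing a node via a suffix link (which decreases the depth by one), or skip-counting along an edge via the skip/count trick --- can be mirrored by an $O(1)$ update to $d$. Hence the counter adds no asymptotic overhead, and the total time remains $O(n\log\sigma)$. The depth $d$ read at the end of phase $i$ is then $|\lrs(T[1..i])|$ as required.

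The main point of care, which I expect to be the only real obstacle, is pinning down the active-point/$\lrs$ correspondence \emph{exactly at phase boundaries}, rather than mid-phase, and verifying that every transition of the active point (including the implicit ``extension by $T[i+1]$'' that closes a phase) admits a consistent $O(1)$ update to $d$. Once the invariant is cleanly stated, the bookkeeping is routine and inherits the running time directly from \cite{Ukkonen95}.
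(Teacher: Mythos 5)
Your proposal is correct and matches the paper's treatment: the paper also cites Ukkonen's construction for the $O(n\log\sigma)$ bound and relies on exactly the observation you make, namely that the active point maintained by the algorithm is the locus of $\lrs(T[1..i])$, so its string depth can be kept as an $O(1)$-overhead counter. No divergence from the paper's approach worth noting.
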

In other words, for every $j = 1, 2, \ldots, n-1$,
we can update $\STree(T[1..j])$ to $\STree(T[1.. j+1])$ in amortized $O(\log\sigma)$ time.
Ukkonen's algorithm maintains the \emph{active point} in the suffix tree,
which is the locus of the longest repeating suffix of the string.
To maintain the active point efficiently,
Ukkonen's algorithm uses auxiliary data structures called \emph{suffix links}:
Each internal node $u$ of the suffix tree has a suffix link
that points to the node $v$, such that $\str_T(v)$ is the suffix of $\str_T(u)$ of length $|\str_T(u)|-1$.
See Fig.~\ref{fig:stree} for examples of a suffix tree and related notations.
\begin{figure}[tb]
  \centerline{
    \includegraphics[width=0.5\linewidth]{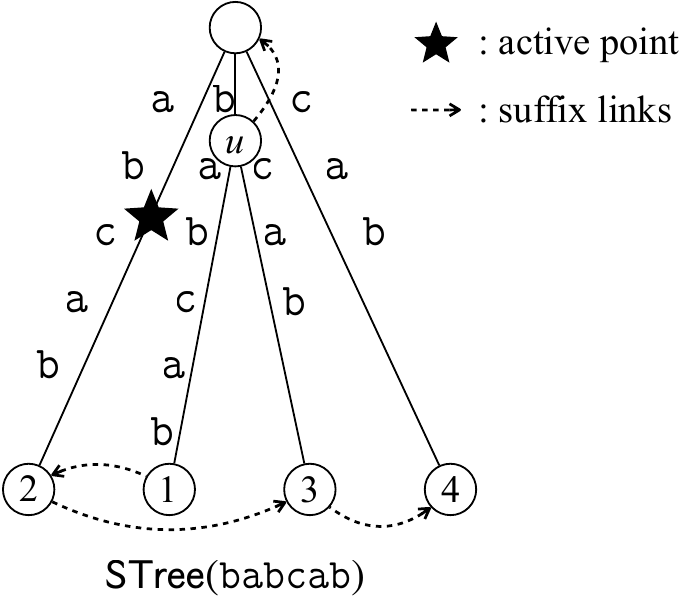}
  }
  \caption{The suffix tree of string $T = \mathtt{babcab}$.
    Each leaf of the tree represents a suffix of $T$, with the integer inside each leaf indicating the starting position of the suffix.
    Namely, the leaf labeled with number $i$ corresponds to $\leaf_T(i)$.
    In this suffix tree, $\str_T(u) = \mathtt{b}$, $\len_T(u) = 1$, $\str_T(\leaf_T(3)) = \mathtt{bcab}$, and $\len_T(\leaf_T(3)) = 4$.
    The star symbol indicates the locus of the active point, which represents the longest repeating suffix $\mathtt{ab}$ of $T$.
    The dotted arrows represent suffix links.
  }
  \label{fig:stree}
\end{figure}

Furthermore, based on Ukkonen's algorithm, we can maintain suffix trees for
a \emph{sliding window} over $T$ in a total of $O(n \log\sigma)$ time,
using space proportional to the window size:
\begin{theorem}[\cite{FialaG89,Larsson96,Senft2005,LeonardarXiv}]\label{thm:slidingST}
  Using $O(W)$ space, a suffix tree for a variable-width sliding window
  can be maintained in amortized $O(\log\sigma)$ time per one operation: either
  (1) appending a character to the right or
  (2) deleting the leftmost character from the window,
  where $W$ is the maximum width of the window.
  Additionally, the length of the longest repeating suffix of the window can be maintained with the same time complexity.
\end{theorem}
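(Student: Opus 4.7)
The plan is to extend Ukkonen's online construction (Theorem~\ref{thm:ukkonen}) with a symmetric deletion routine and a careful encoding of edge labels so that only $O(W)$ space is ever in use. The append operation is handled directly by Ukkonen's algorithm, which also delivers $|\lrs|$ of the current window through the active point in amortized $O(\log\sigma)$ time. The new work lies in the deletion of the leftmost character and in the maintenance of the active point under both operations.

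For a deletion, assume the current window is $T[i.. j]$ and we remove $T[i]$. Since $T[i.. j]$ is a unique suffix of the window, it is represented by its own leaf, which we retrieve in $O(1)$ time via an auxiliary array from window positions to leaves. We delete this leaf; if its parent becomes a unary internal node (and is not the root), we contract it by merging the two incident edges and redirecting the at most one suffix link pointing to it, maintained through a reverse-suffix-link structure. Edge labels are stored as intervals into a circular buffer holding the current window; leaf labels are \emph{open} intervals that automatically extend to the current right endpoint, so appending a character does not rewrite any label. A non-leaf edge whose label happens to begin at the evicted position $i$ is re-anchored by replacing its starting index with another occurrence of the same factor, which is located by following suffix links from a descendant leaf whose starting position is still inside the window.

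The active point, i.e.\ the locus of $\lrs$ of the current window, must also be updated. On append this is handled by Ukkonen's procedure. On deletion the only occurrence that can have disappeared is the one at position $i$, so if this breaks the repetition of $\lrs(T[i.. j])$, we walk the active point upward along the suffix-link spine until a repeating suffix is recovered. The main obstacle, and the reason this result is nontrivial despite its apparent symmetry with Ukkonen's algorithm, is the edge-label re-anchoring: a naive implementation that rewrites every affected label at each deletion can cost $\Theta(W)$ per step. The remedy used in~\cite{Larsson96,Senft2005,LeonardarXiv} is to perform the rewrites lazily and charge them along suffix links, so that the total number of rewrites over the whole slide is $O(n)$; combined with the standard potential-function analysis underlying Ukkonen's algorithm, this yields the claimed $O(\log\sigma)$ amortized time per append or delete, and the $O(W)$ space bound follows because the tree has at most $2W$ nodes and all labels reference the circular buffer of length $W$.
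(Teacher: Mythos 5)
The paper itself gives no proof of this theorem: it is imported as a black box from the cited works of Fiala and Greene, Larsson, Senft, and Leonard et al., so your sketch has to be judged against those constructions. Measured that way, it has a genuine gap at the heart of the deletion step. When the leftmost suffix $T[i..j]$ is removed, the only suffix whose status can flip from repeating to unique is the current longest repeating suffix $s=\lrs(T[i..j])$, and this happens exactly when the active point lies on the edge leading to the leaf being deleted, i.e., when the only occurrence of $s$ other than its suffix occurrence starts at the evicted position $i$. In that case $s$ is a \emph{unique} suffix of the new window $T[i+1..j]$ and therefore must appear as a leaf of $\STree(T[i+1..j])$; the standard remedy (Larsson, Senft) is not to delete the leaf outright but to truncate its edge at the active point and relabel it as the leaf of the suffix of length $|s|$, after which the new active point is obtained by exactly one suffix-link step (one can show the new $\lrs$ has length $|s|-1$). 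Your routine --- delete the leaf, contract the unary parent, then ``walk the active point along suffix links until a repeating suffix is recovered'' --- updates the active point but leaves the tree with no leaf for $s$, so the structure is no longer the suffix tree of the window. Under subsequent appends that missing open leaf edge is never extended, so suffixes remain permanently unindexed and later reported $\lrs$ lengths (precisely what this paper's counting algorithm consumes) can be wrong. Handling this case is the delicate point that makes sliding suffix trees nontrivial; omitting it is not a presentational shortcut but a correctness failure.

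Two smaller inaccuracies: the amortization that keeps edge labels inside the $O(W)$ buffer is the ``percolating update'' credit scheme, in which credits are charged and propagated along parent edges of the tree, not along suffix links; and re-anchoring an internal edge label is done by consulting a leaf in the subtree below it whose starting position is still inside the window (maintained, e.g., as in the Leonard et al. reference), again with no suffix-link walk involved. The append side of your sketch, the circular-buffer representation with open leaf edges, and the overall potential argument for the amortized $O(\log\sigma)$ bound are consistent with the cited constructions.
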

In other words, for every incremental pair of indices $i, j$ with $i \le j$,
we can update $\STree(T[i..j])$ to either $\STree(T[i.. j+1])$ or $\STree(T[i+1.. j])$
in amortized $O(\log\sigma)$ time.
Later, we use the above algorithmic results as black boxes.

\subsection{Weighted ancestor queries}
For a node-weighted tree $\mathcal{T}$, where the weight of each non-root node is greater than the weight of its parent,
a \emph{weighted ancestor query} (WAQ) is defined as follows:
given a node $u$ of the tree and an integer $t$,
the query returns the farthest (highest) ancestor of $u$ whose weight is at least $t$.
We denote by $\WAQ_{\mathcal{T}}(u, t)$ the returned node for weighted ancestor query $(u, t)$ on tree $\mathcal{T}$.
The subscript $\mathcal{T}$ will be omitted when clear from the context.
In general, a WAQ on a tree of size $N$ can be answered in $O(\log\log N)$ time,
which is worst-case optimal with linear space.
However, if the input is the suffix tree of some string of length $n$, and
the weight function is $\len_T$, as defined in the previous subsection,
any WAQ can be answered in $O(1)$ time after $O(n)$-time preprocessing~\cite{BelazzouguiKPR21}.

\section{Counting closed factors online}\label{sec:online}

In this section, we propose an algorithm to count the distinct closed factors
for a string given in an online manner.
Since there may be $\Omega(n^2)$ distinct closed factors in a string of length $n$~\cite{BadkobehFL15,ParshinaP24},
enumerating them requires quadratic time in the worst case.
However, for counting the closed factors, we can achieve subquadratic time
even when the string is given in an online manner, as described below.

\subsection{Changes in number of closed factors}
First we consider the changes in the number of distinct closed factors
when a character is appended.
Let $\DC(T)$ be the set of distinct closed factors occurring in $T$.
Let $d_j = |\DC(T[1.. j])| - |\DC(T[1.. j-1])|$ for $2\le j \le n$.
For convenience let $d_1 = 1$.

\begin{observation}\label{obs:bord}
  For any closed suffixes $u$ and $v$ of the same string,
  $|\bord(u)| \ne |\bord(v)|$ holds if $u \ne v$.
\end{observation}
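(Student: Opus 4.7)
The plan is to argue by contradiction. Suppose $u\ne v$ are closed suffixes of a common string $S$ with $|\bord(u)|=|\bord(v)|$. Without loss of generality I would take $|u|<|v|$, which forces $u$ to be a proper suffix of $v$ since both end at the end of $S$. The degenerate case $|u|=1$ can be disposed of first: then $|\bord(u)|=0$, while the definition of closedness forces $v$ (of length $>1$) to admit a non-empty border occurring exactly twice, so $|\bord(v)|\ge 1$ and the conclusion is immediate. Thus I may assume $|u|\ge 2$.

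The substantive preliminary step is an auxiliary claim I would establish: for every closed string $w$ with $|w|\ge 2$, the \emph{longest} border $\bord(w)$ itself occurs exactly twice in $w$. The definition of closedness only gives \emph{some} non-empty border $b'$ of $w$ with exactly two occurrences; but because $b'$ is simultaneously a prefix and a suffix of $\bord(w)$, any occurrence of $\bord(w)$ strictly in the interior of $w$ would produce an occurrence of $b'$ distinct from the two boundary ones, contradicting closedness. This small lemma is where I expect to be most careful, and it is the main conceptual obstacle; the rest is pure position bookkeeping.

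With the claim in hand, the remainder is short. Since $\bord(u)$ and $\bord(v)$ are both suffixes of $S$ of equal length, they coincide as a single string $b$. Because $b$ is a prefix of $u$ and $u$ is a suffix of $v$, the word $b$ occurs in $v$ starting at position $|v|-|u|+1$. Applying the auxiliary claim to $v$, the only positions at which $b$ occurs in $v$ are $1$ and $|v|-|b|+1$, so $|v|-|u|+1$ must equal one of these. The first option forces $|u|=|v|$, contradicting $u\ne v$; the second forces $|u|=|b|=|\bord(u)|$, contradicting that $\bord(u)$ is a proper border of $u$. Either way we reach a contradiction, which is what the observation asserts.
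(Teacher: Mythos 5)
Your argument is correct. Note that the paper states this Observation without any proof, so there is no authors' argument to diverge from; the fact you isolate as an auxiliary claim --- that the \emph{longest} border of a closed word of length at least two occurs exactly twice --- is precisely the (standard) fact the authors implicitly rely on, and your subsequent position bookkeeping (the prefix occurrence of $b$ inside the shorter closed suffix $u$ would be a third occurrence of $b=\bord(v)$ in $v$) is exactly the natural way to finish. One small point to tighten when writing this up: in the auxiliary claim, the assertion that an interior occurrence of $\bord(w)$ yields an occurrence of $b'$ \emph{distinct} from the two boundary ones needs a word of justification, since the prefix-aligned copy of $b'$ could in principle land at position $|w|-|b'|+1$; this is excluded because an occurrence of $\bord(w)$ starting there would extend past the end of $w$ unless $b'=\bord(w)$ (alternatively, use the copy of $b'$ aligned with the right end of the interior occurrence, which can coincide with neither boundary occurrence). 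With that detail spelled out, the proof is complete.
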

Based on this observation,
we count the distinct borders of closed factors
instead of the closed factors themselves. 
We show the following:
\begin{lemma}\label{lem:num}
  For each $j$ with $1 \le j \le n$,
  $d_j = |\lrs(T[1.. j])| - |\lrslrs(T[1.. j]))|$ holds
  if $\lrs(T[1.. j]) \ne \varepsilon$.
  Also, $d_j = 1$ holds if $\lrs(T[1.. j]) = \varepsilon$.
\end{lemma}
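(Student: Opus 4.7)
The plan is to reduce counting new closed factors to counting certain closed suffixes. A factor that is new at step $j$ must end at position $j$, so it is a suffix of $T[1..j]$; such a suffix is new exactly when it is unique in $T[1..j]$, equivalently when its length strictly exceeds $L := |\lrs(T[1..j])|$. Hence $d_j$ equals the number of closed suffixes of $T[1..j]$ of length greater than $L$. The degenerate case $L = 0$ can be dispatched first: then the single character $T[j]$ is unique, so any proposed nonempty border of a longer suffix must end with $T[j]$ and therefore span the entire suffix, contradicting the border being proper. Thus the only new closed factor is $T[j]$ itself, giving $d_j = 1$.

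Assume $L \ge 1$ and set $L' := |\lrslrs(T[1..j])|$, and $b_k := T[j-k+1..j]$ for $0 \le k \le L$. I would parameterize closed suffixes by the length of their longest border; by Observation~\ref{obs:bord} this parameterization is injective, and the possible lengths lie in $\{1,\dots,L\}$. For a given $k$, any border of a suffix of $T[1..j]$ of length $k$ equals $b_k$; moreover, $b_k$ must occur exactly twice in the closed suffix, which forces that suffix to be $S_k := T[q_k..j]$, where $q_k$ is the penultimate occurrence of $b_k$ in $T[1..j]$. The condition that $S_k$ is new becomes $|S_k| > L$, i.e., $q_k \le j - L$, i.e., $b_k$ has no occurrence in the range $[j-L+1,\,j-k]$.

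The combinatorial core is to characterize when $q_k \le j - L$. Writing $U := \lrs(T[1..j])$, which occupies positions $[j-L+1,\,j]$, occurrences of $b_k$ in $[j-L+1,\,j-k]$ are in bijection with second occurrences of $b_k = U[L-k+1..L]$ inside $U$. By the definition of $\lrs$, $b_k$ has a second occurrence in $U$ (equivalently, $b_k$ is repeating in $U$) iff $k \le L'$. Therefore $q_k \le j - L$ iff $k > L'$, producing exactly $L - L'$ eligible values $k \in (L', L]$.

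The most delicate step, which I expect to be the main obstacle, is to verify that each $S_k$ with $k \in (L', L]$ is genuinely closed with $b_k$ as its \emph{longest} border, not merely as some border. Closedness is immediate: by construction $b_k$ occurs in $S_k$ exactly at positions $1$ and $|S_k|-k+1$. For the longest-border claim I would argue by contradiction: if $b_{k''}$ with $k < k'' \le L$ were also a border of $S_k$, then its being a prefix of $S_k$ gives $T[q_k..q_k+k''-1] = T[j-k''+1..j]$; reading off the last $k$ characters of this equality yields an occurrence of $b_k$ at position $q_k + (k''-k)$, which lies strictly inside $(q_k,\, j-k+1)$ since $k''\le L$ and $q_k\le j-L$, contradicting the penultimate nature of $q_k$. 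Combined with the straightforward converse that every closed suffix of length $>L$ arises as some $S_k$, this yields $d_j = L - L'$, completing the plan.
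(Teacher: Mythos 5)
Your proof is correct and takes essentially the same route as the paper's: both count the new closed factors as closed suffixes of $T[1..j]$, parameterized injectively by the length of their longest border (Observation~\ref{obs:bord}), and both show that the admissible border lengths are exactly those in the interval $(|\lrslrs(T[1..j])|,\,|\lrs(T[1..j])|]$, via the observation that a border of length at most $|\lrslrs(T[1..j])|$ would repeat inside $\lrs(T[1..j])$ while longer ones cannot. Your write-up simply makes explicit some steps the paper asserts without detail (the construction of $S_k$ from the penultimate occurrence of $b_k$ and the verification that $b_k$ is indeed the longest border of $S_k$).
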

\begin{proof}
  The latter case is obvious 
  since $\lrs(T[1.. j]) = \varepsilon$ implies that $T[i]$ is a unique character in $T[1.. j]$,
  making it the only new closed factor.
  In the following, we assume $\lrs(T[1.. j]) \ne \varepsilon$ and denote $t_j = \lrs(T[1.. j])$ and $z_j = \lrslrs(T[1.. j])$.
  It can be observed that for any new closed factor $u \in \DC(T[1..j])\setminus\DC(T[1.. j-1])$,
  $u$ is a unique suffix of $T[1.. j]$
  and the longest border of $u$ repeats in $T[1.. j]$.
  Equivalently, $|u| > |t_j|$ and $|\bord(u)| \le |t_j|$ hold.
  Additionally, $|u| > |t_j|$ implies $|\bord(u)| > |z_j|$.
  Thus $|z_j| < |\bord(u)| \le |t_j|$, and hence $d_j \le |t_j| - |z_j|$ by Observation~\ref{obs:bord}.
  Conversely, for any suffix $b$ of $T[1..j]$ satisfying $|z_j| < |b| \le |t_j|$,
  there exists exactly one closed suffix whose longest border is $b$
  since $b$ is repeating in $T[1..j]$.
  Further, since $|b| > |z_j|$, the closed suffix is longer than $t_j$,
  and thus the closed suffix is unique in $T[1.. j]$.
  Therefore, $d_j \ge |t_j| - |z_j|$ holds.
\end{proof}

We note that the upper bound
$d_j \le |\lrs(T[1.. j])| - |\lrslrs(T[1.. j]))|$ of $d_j$
is already claimed in~\cite{ParshinaP24}, however, the equality was not shown.
By definition of $d_j$ and Lemma~\ref{lem:num},
the next corollary immediately follows:
\begin{corollary}\label{cor:sum}
  The total number of distinct closed factors of $T$ is
  \[
    \sum_{j\in J}|\lrs(T[1.. j])| - \sum_{j\in J}|\lrslrs(T[1.. j]))| + n - |J|
  \]
  where $J = \{j \in [1, n] \mid \lrs(T[1.. j]) \ne \varepsilon\}$.
\end{corollary}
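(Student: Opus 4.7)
The plan is to derive the corollary as a direct consequence of Lemma~\ref{lem:num} by telescoping. I will start from the definition $d_j = |\DC(T[1.. j])| - |\DC(T[1.. j-1])|$ for $2 \le j \le n$ together with the convention $d_1 = 1$, which is consistent because $\DC(T[1.. 1]) = \{T[1]\}$ has cardinality $1$. Summing $d_1 + d_2 + \cdots + d_n$ telescopes to $|\DC(T[1.. n])| = |\DC(T)|$, so the total number of distinct closed factors of $T$ equals $\sum_{j=1}^n d_j$.

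Next I would split this sum according to whether $\lrs(T[1.. j])$ is empty or not, i.e., according to whether $j \in J$ or $j \notin J$. For $j \in J$, Lemma~\ref{lem:num} gives $d_j = |\lrs(T[1.. j])| - |\lrslrs(T[1.. j])|$; summing these contributions over $J$ produces the first two sums in the claimed expression. For $j \notin J$, Lemma~\ref{lem:num} gives $d_j = 1$ (and the same holds for $j = 1$ since $\lrs(T[1]) = \varepsilon$, so $1 \notin J$, and $d_1 = 1$ agrees with the lemma's second case). Summing $1$ over all such indices yields $n - |J|$, which accounts for the final term.

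Putting these pieces together gives exactly
\[
  |\DC(T)| \;=\; \sum_{j \in J} |\lrs(T[1.. j])| \;-\; \sum_{j \in J} |\lrslrs(T[1.. j])| \;+\; (n - |J|),
\]
as required. There is no real obstacle here; the only minor care-point is to check that the boundary index $j = 1$ is handled consistently (it is, since $1 \notin J$ and the convention $d_1 = 1$ matches the $\lrs = \varepsilon$ case of Lemma~\ref{lem:num}), so the argument is essentially a one-line consequence of the lemma.
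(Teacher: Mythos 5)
Your proof is correct and matches the paper's intent exactly: the paper gives no separate argument, stating only that the corollary ``immediately follows'' from the definition of $d_j$ and Lemma~\ref{lem:num}, and your telescoping sum split over $j \in J$ versus $j \notin J$ (with the correct observation that $1 \notin J$ and $d_1 = 1$ is consistent with the lemma) is precisely that omitted routine verification.
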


\subsection{Online algorithm}
By Theorem~\ref{thm:ukkonen},
we can compute $\sum_{j\in J}|\lrs(T[1.. j])|$ in $O(n\log\sigma)$ time online.
Thus, by Corollary~\ref{cor:sum}, our remaining task is to compute $\sum_{j\in J}|\lrslrs(T[1.. j])|$ online,
equivalently, to compute 
the sequence $\mathcal{Z} = (|z_1|, |z_2|$ $\ldots, |z_n|)$
where $z_j = \lrslrs(T[1.. j])$ for each $j$.
To compute the sequence,
we use sliding suffix trees of Theorem~\ref{thm:slidingST}.
It is known that the starting position of $\lrs(T[1.. j])$ is not smaller than
that of $\lrs(T[1.. j-1])$.
Namely, starting from $t_1 = \lrs(T[1.. 1]) = \varepsilon$,
the sequence $\langle t_1, t_2, \ldots, t_n\rangle$ of factors of $T$
can be obtained by $O(n)$ \emph{sliding-operations} on $T$ that consists of 
(1) appending a character to the right or
(2) deleting the leftmost character from the factor,
where $t_j = \lrs(T[1.. j])$ for each $j$.
Thus, by Theorem~\ref{thm:slidingST}, we can obtain every $|z_i| \in \mathcal{Z}$ in amortized $O(\log\sigma)$ time
by appropriately applying sliding-operations
to reach the windows $t_j$ for all $j$ with $j \in J$ in order.
Finally, we obtain the following:
\begin{theorem}\label{thm:online}
  For a string $T$ of length $n$ given in an online manner,
  we can count the distinct closed factors of $T$
  in $O(n\log\sigma)$ time using $O(n)$ space
  where $\sigma$ is the alphabet size.
\end{theorem}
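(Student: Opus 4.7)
The plan is to apply Corollary~\ref{cor:sum}, which reduces the counting problem to computing two sums $\sum_{j\in J}|\lrs(T[1..j])|$ and $\sum_{j\in J}|\lrslrs(T[1..j])|$ together with the set $J = \{j : \lrs(T[1..j])\ne\varepsilon\}$, all in an online fashion. The first sum and the membership $j\in J$ (which is equivalent to $|\lrs(T[1..j])|>0$) are handed to us directly by Theorem~\ref{thm:ukkonen}: running Ukkonen's algorithm incrementally gives each $|\lrs(T[1..j])|$ within a total of $O(n\log\sigma)$ time and $O(n)$ space. So the real work is to compute $|z_j|=|\lrslrs(T[1..j])|$ for every $j\in J$ within the same budget.

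For the second sum, I would run a second, independent process that maintains a suffix tree for a sliding window over $T$, whose content is always the current factor $t_j=\lrs(T[1..j])$. The justification that this is achievable with $O(n)$ total sliding operations is the monotonicity of the endpoints of $t_j$: as $j$ grows, neither the starting position nor the ending position of $\lrs(T[1..j])$ can decrease (the ending position is trivially $j$, and the starting position is non-decreasing because any repeating suffix of $T[1..j-1]$ whose occurrence lies inside $T[1..j]$ remains a repeating suffix). Consequently, the transition from the window $t_{j-1}$ to the window $t_j$ decomposes into a bounded number of right-append and left-delete operations whose total count over all $j$ is $O(n)$. Theorem~\ref{thm:slidingST} then provides us with a sliding-window suffix tree at amortized $O(\log\sigma)$ cost per operation, and, crucially, it also maintains the length of the longest repeating suffix of the current window, which is exactly $|z_j|$ when the window equals $t_j$.

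Putting the pieces together, for each incoming character $T[j]$ I would first advance Ukkonen's (static-growth) suffix tree to obtain $|t_j|$, then perform the appropriate sliding operations on the second, sliding-window suffix tree so that its contents become $t_j$, read off $|z_j|$ from its active-point information, and accumulate $|t_j|-|z_j|$ into a running counter whenever $j\in J$, adding $1$ otherwise. By Corollary~\ref{cor:sum} this counter equals $|\DC(T[1..j])|$. The total time is $O(n\log\sigma)$ for Ukkonen's algorithm plus $O(n)$ sliding operations at amortized $O(\log\sigma)$ each, and the space is $O(n)$ for both suffix trees.

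The step I expect to be delicate is the monotonicity-of-endpoints argument for $\lrs(T[1..j])$, since the algorithm's correctness rests on being able to realize the sequence $\langle t_1,\ldots,t_n\rangle$ via only $O(n)$ sliding operations. The author invokes this as a known fact, and indeed it follows from a short argument about how $\lrs$ behaves under a one-character extension (an occurrence of $t_{j-1}$ other than the suffix occurrence extends to an occurrence of $t_{j-1}\cdot T[j]$ only if the next character matches, so the new $t_j$ cannot start earlier than $t_{j-1}$), but I would want to verify this carefully before declaring the sliding-window approach valid.
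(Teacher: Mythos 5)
Your proposal is correct and follows essentially the same route as the paper: reduce via Corollary~\ref{cor:sum}, obtain $\sum_{j\in J}|\lrs(T[1..j])|$ from Theorem~\ref{thm:ukkonen}, and obtain each $|z_j|=|\lrs(t_j)|$ by driving a sliding-window suffix tree (Theorem~\ref{thm:slidingST}) through the windows $t_j$, using the monotonicity of the starting positions of $\lrs(T[1..j])$ to bound the total number of sliding operations by $O(n)$. The only difference is that you sketch a justification of that monotonicity (which is sound), whereas the paper simply cites it as a known fact.
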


\section{Counting closed factors offline: shaving log factor}\label{sec:offline}

In this section, we assume that the alphabet is linearly sortable,
allowing us to construct $\STree(T)$ in $O(n)$ time offline~\cite{Farach97}.
Our goal is to remove the $\log \sigma$ factor from the complexity in Theorem~\ref{thm:online},
which arises from using dynamic predecessor dictionaries (e.g., AVL trees) at non-leaf nodes.
To shave this logarithmic factor, we do not rely on such dynamic dictionaries.
Instead, use a WAQ data structure constructed over $\STree(T)$.

\paragraph*{\bf Simulating sliding suffix tree in linear time.}
The idea is to simulate the online algorithm from Section~\ref{sec:online} using the suffix tree of the entire string $T$.
In algorithms underlying Theorem~\ref{thm:ukkonen} and~\ref{thm:slidingST},
the topology of suffix trees and the loci of the active points change step-by-step.
The changes include
(i) moving the active point, 
(ii) adding a node, an edge, or a suffix link, and
(iii) removing a node, an edge, or a suffix link.

Our data structure consists of two suffix trees:
\begin{itemize}
  \item[(1)] The suffix tree $\STree(T)$ of $T$ enhanced with a WAQ data structure.
  \item[(2)] The suffix tree $\STree(w)$ of the factor $w = T[i.. j]$, which represents a sliding window.
\end{itemize}
Note that $\STree(w)$ includes the active point.
All nodes and edges of $\STree(w)$ are connected to their corresponding nodes and edges in $\STree(T)$.
Specifically, an internal node $u$ of $\STree(w)$ is connected to the node $\tilde{u}$ of $\STree(T)$ where $\str_w(u) = \str_T(\tilde{u})$.
Similarly, an edge $(u, v)$ of $\STree(w)$ is connected to the edge $(\tilde{u}, v')$ of $\STree(T)$
where the first characters of the edge-labels are the same.
Such node $\tilde{u}$ and edge $(\tilde{u}, v')$ in $\STree(T)$ always exist:
An internal node $u$ implies that, for some distinct characters $c_1$ and $c_2$,
$\str_w(u)c_1$ and $\str_w(u)c_2$ occur in $w$, and thus also in $T$.
Furthermore, a leaf $\ell$ of $\STree(w)$, representing suffix $T[k.. j]$ of $w$,
is connected to the leaf of $\STree(T)$ representing the suffix $T[k.. n]$ of $T$.
Note that suffix links of $\STree(w)$ do not connect to $\STree(T)$ and are maintained within $\STree(w)$.
See Fig.~\ref{fig:offline} for illustration.
\begin{figure}[tb]
  \centerline{
    \includegraphics[width=\linewidth]{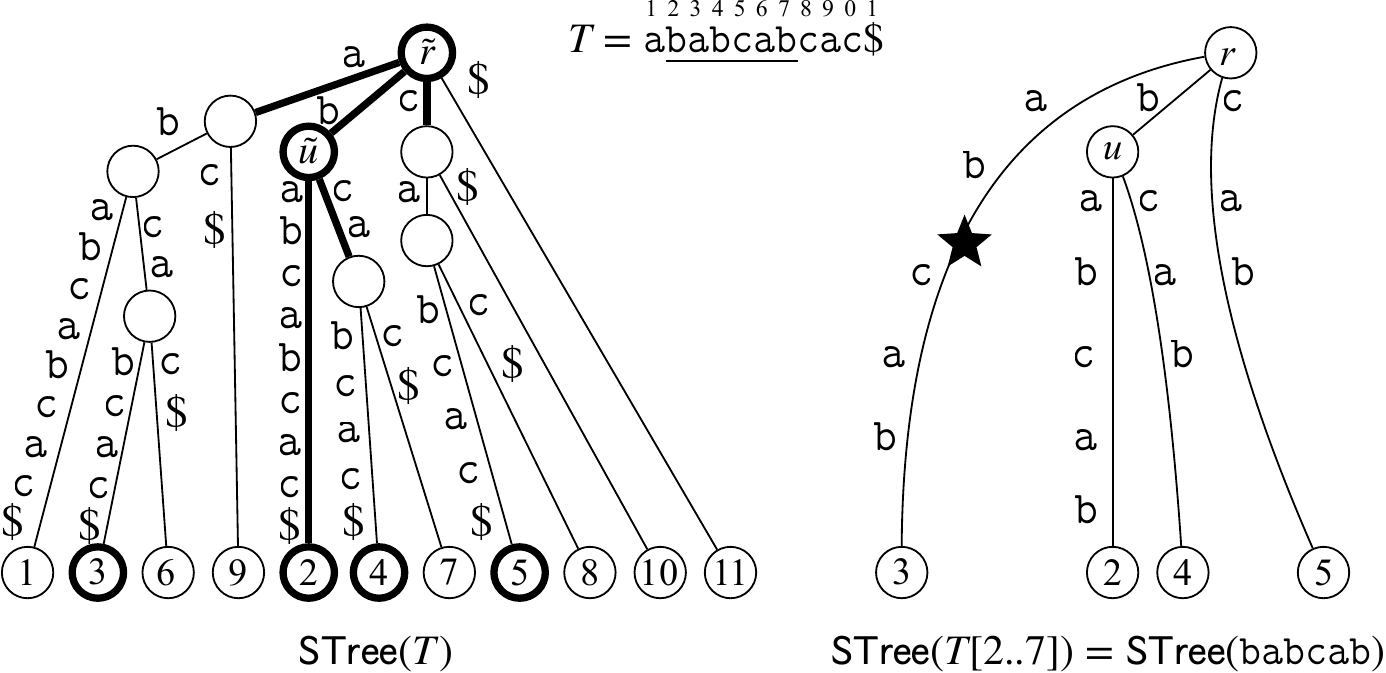}
  }
  \caption{
    The suffix tree of string $T = \mathtt{ababcabcac\$}$ is shown on the left.
    The suffix tree of $T[2.. 7] = \mathtt{babcab}$, which is the same as the one in Fig.~\ref{fig:stree}, is shown on the right.
    Suffix links are omitted in this figure.
    In the tree on the left, each node and edge enclosed in a bold line is connected to a corresponding one in the tree on the right.
    For clarity, those connections are not drawn. 
  }
  \label{fig:offline}
\end{figure}

We maintain the active point in $\STree(w)$ as follows:
\begin{itemize}
  \item When the active point is on an edge and moves down,
    if the character on the edge following the active point matches the next character $T[j+1]$,
    the active point simply moves down.
    Otherwise, if the active point cannot move down, we create a new branching node $u$ at the locus and add a new leaf $\ell$ and edge $e = (u, \ell)$.
    The new leaf and edge are connected to the corresponding edge $\tilde{e}$ and leaf of $\STree(T)$ as the above discussion.
    The edge $\tilde{e}$ of $\STree(T)$ can be found in constant time using a weighted ancestor query on $\STree(T)$.
  \item Similarly, when the active point is a node $u$ and moves down,
    if $u$ represents $T[i'.. j]$, we
    query $\WAQ_{\STree(T)}(\leaf_T(i'), j-i'+2)$ and check the resulting edge.
    The resulting edge of $\STree(T)$ is already connected to an edge of the smaller suffix tree
    if and only if $u$ has an outgoing edge whose edge label starts with $T[j+1]$.
    If the active point can move down to the existing edge, we simply move it to the edge (via its corresponding edge in $\STree(T)$).
    Otherwise, we create a new leaf $\ell$ and edge $e = (u, \ell)$, connecting them to their corresponding leaf and edge of $\STree(T)$ as described above.
\end{itemize}

In Ukkonen's algorithm, a new node/edge is created only if the active point reaches it,
so every node/edge creation can be simulated as described.
Node/edge deletions are straightforward: 
we simply disconnect the node/edge from $\STree(T)$ and remove them.
Therefore, we can simulate the sliding suffix tree over $T$ in amortized $O(1)$ time per sliding-operation.
We have shown the next theorem:

\begin{theorem}\label{thm:simulatesliding}
  Given an offline string $T$ over a linearly sortable alphabet,
  we can simulate a suffix tree for a sliding window in a total of $O(n)$ time.
\end{theorem}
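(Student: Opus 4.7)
The plan is to follow the two-level data-structure idea already outlined above the statement and verify that it yields $O(n)$ total time. First, I would preprocess $\STree(T)$ once: build it in $O(n)$ time using Farach's algorithm (available since the alphabet is linearly sortable), then augment it with the WAQ structure of Belazzougui et al., so that any weighted ancestor query on $\STree(T)$ with weight function $\len_T$ is answered in $O(1)$ after $O(n)$ preprocessing. In parallel, I would initialize the sliding-window suffix tree $\STree(w)$ as empty, together with its active point.

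Next I would spell out the invariant that must be preserved under every sliding-operation: (a) every internal node $u$ of $\STree(w)$ stores a pointer to the unique node $\tilde u$ of $\STree(T)$ with $\str_T(\tilde u)=\str_w(u)$; (b) every edge $(u,v)$ of $\STree(w)$ stores a pointer to the unique edge of $\STree(T)$ leaving $\tilde u$ whose label starts with the same character; and (c) every leaf of $\STree(w)$ corresponding to suffix $T[k..j]$ of $w$ stores a pointer to $\leaf_T(k)$. The existence of $\tilde u$ and of the edge in (b) was already argued in the text (both branching children of $u$ occur in $w$, hence in $T$). I would then verify the invariant at initialization (trivially true) and after each of the two sliding-operations.

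The main work is to implement Ukkonen's update while replacing every child-dictionary lookup with an $O(1)$-time weighted ancestor query on $\STree(T)$. I would go through the three operation types enumerated above the statement: (i) moving the active point down an existing edge is checked by reading the single character at the appropriate depth of the connected edge of $\STree(T)$; (ii) moving down from a node $u$ representing $T[i'..j]$ uses $\WAQ_{\STree(T)}(\leaf_T(i'),j-i'+2)$ to locate the candidate edge, and the invariant lets us decide in $O(1)$ whether it is already present in $\STree(w)$; (iii) creating a new branching node or leaf is done exactly as in Ukkonen, with the new objects immediately connected to their counterparts in $\STree(T)$ (again located by a single WAQ). Node and edge deletions only require disconnecting the pointers. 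Suffix links of $\STree(w)$ are maintained entirely inside $\STree(w)$, so they are not affected. The bound then follows by combining Theorem~\ref{thm:slidingST}, which says Ukkonen's sliding-window algorithm performs $O(n)$ elementary operations of the three types above, with the $O(1)$ cost per operation just established.

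The main obstacle I anticipate is a subtle point in step~(ii): after a suffix-link traversal in $\STree(w)$, the ``origin index'' $i'$ stored at the active point may refer to a leaf of $\STree(T)$ whose suffix depth differs from what the sliding window currently sees, so I need to argue carefully that $\leaf_T(i')$ together with the depth $j-i'+2$ really points to the correct locus in $\STree(T)$ (essentially, that any starting position $i'$ witnessing the current repeating factor works, which follows from the fact that the same factor occurs at position $i'$ in $T$). Once this correspondence is pinned down, the rest of the argument is a routine case analysis of Ukkonen's updates, and the total cost is $O(n)$ preprocessing plus $O(1)$ amortized per sliding-operation, giving $O(n)$ overall.
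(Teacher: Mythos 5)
Your proposal matches the paper's own argument: maintain the sliding-window tree $\STree(w)$ with pointers into a WAQ-augmented $\STree(T)$, replace child-dictionary lookups by constant-time weighted ancestor queries at the active point, note that Ukkonen-style creations happen only at the active point and deletions just drop pointers, and charge $O(1)$ amortized per sliding-operation via Theorem~\ref{thm:slidingST}. This is essentially the same proof, with your invariant (a)--(c) and the remark about the witness position $i'$ simply making explicit what the paper states more briefly.
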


By Theorem~\ref{thm:simulatesliding} and the algorithm described in Section~\ref{sec:online}, we obtain the following:
\begin{corollary}\label{cor:offline}
  We can count the distinct closed factors in $T$ over a linearly sortable alphabet in $O(n)$ time using $O(n)$ space.
\end{corollary}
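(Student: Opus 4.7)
The plan is to combine Theorem~\ref{thm:simulatesliding} with the online algorithm of Section~\ref{sec:online}, replacing each sliding suffix tree with its simulation over $\STree(T)$ so that the $\log\sigma$ factor disappears. First, I would build $\STree(T)$ in $O(n)$ time using Farach's algorithm (available since the alphabet is linearly sortable), and preprocess it for $O(1)$-time weighted ancestor queries in $O(n)$ additional time, as stated in Section~\ref{subsec:suftree}. This static, preprocessed structure serves as the backbone against which the two sliding suffix trees are simulated.

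Next, I would compute the sequence $(|\lrs(T[1..j])|)_{j=1}^{n}$ by simulating Ukkonen's algorithm on the prefixes of $T$ through Theorem~\ref{thm:simulatesliding}, treating a growing window as a sliding window whose left endpoint never advances. This yields $\sum_{j\in J}|\lrs(T[1..j])|$ and the set $J$ in $O(n)$ total time. Then I would compute $(|\lrslrs(T[1..j])|)_{j\in J}$ using a second simulated sliding suffix tree: as argued in Section~\ref{sec:online}, starting from the empty factor, all windows $\lrs(T[1..j])$ can be reached via $O(n)$ sliding operations, because their left endpoints are monotonically non-decreasing. Hence Theorem~\ref{thm:simulatesliding} delivers every $|\lrslrs(T[1..j])|$ in $O(n)$ total time as well. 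A final linear scan substitutes the two aggregate sums into the formula of Corollary~\ref{cor:sum} to output $|\DC(T)|$.

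The main subtlety I expect is ensuring that the two simulated sliding suffix trees can coexist over the same $\STree(T)$ without interfering: each one needs its own copies of nodes, edges, suffix links, and active point, and only accesses $\STree(T)$ in a read-only way through the node/edge connections and weighted ancestor queries described in Theorem~\ref{thm:simulatesliding}. Since those lookups do not modify $\STree(T)$ and are independent of which simulation issues them, the two simulations run side by side without any coordination. All components---Farach's construction, the WAQ preprocessing, the two simulations, and the final arithmetic---fit within $O(n)$ time and $O(n)$ space, giving the claimed bound.
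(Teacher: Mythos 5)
Your proposal is correct and follows essentially the same route as the paper: build $\STree(T)$ with a WAQ structure, then run the online algorithm of Section~\ref{sec:online} with both suffix-tree computations (the Ukkonen-style prefix tree for $|\lrs(T[1..j])|$ and the sliding tree for $|\lrslrs(T[1..j])|$) replaced by the simulation of Theorem~\ref{thm:simulatesliding}, and finish with Corollary~\ref{cor:sum}. Your observation that the simulations only read $\STree(T)$ and so can coexist (or equivalently be run one after the other) is a fine way to make the combination explicit.
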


\section{Enumerating closed factors offline}\label{sec:enum}
In this section, we discuss the enumeration of (distinct) closed factors in an offline string $T$.
We first consider The \emph{open-close-array} $\OC_T$ of $T$, which is the binary sequence of length $n$ where $\OC_T[i] = 1$ iff $T[i.. n]$ is closed~\cite{OCarray}.
Since an open-close-array can be computed in $O(n)$ time, we can enumerate all the occurrences of closed factors of $T$ in $\Theta(n^2)$ time
by constructing $\OC_{T[i..n]}$ for all $1 \le i \le n$.
We then map the occurrences to their corresponding loci on $\STree(T)$ using constant-time WAQs $O(n^2)$ times.
Thus we have the following:
\begin{proposition}
  We can enumerate all the occurrences of closed factors of $T$ and the distinct closed factors of $T$ in $\Theta(n^2)$ time.
\end{proposition}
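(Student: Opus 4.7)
The plan is to prove both directions of the $\Theta(n^2)$ bound. For the upper bound, I will expand on the sketch already given in the paragraph preceding the statement; for the lower bound, I will invoke the known worst-case lower bound on the number of distinct closed factors.

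\textbf{Upper bound.} First I will verify that computing $\OC_{T[i..n]}$ for every $i \in [1,n]$ takes $O(n^2)$ time in total: each individual open-close array is computed in $O(n-i+1)$ time by the algorithm of~\cite{OCarray}, so summing gives $\sum_{i=1}^{n} O(n-i+1) = O(n^2)$. Each position $j$ with $\OC_{T[i..n]}[j-i+1] = 1$ witnesses one occurrence $(i,j)$ of the closed factor $T[i..j]$, and conversely every occurrence of every closed factor of $T$ is reported exactly once this way. Hence all occurrences of closed factors are enumerated within $O(n^2)$ time, matching the stated bound.

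\textbf{From occurrences to distinct factors.} Next I will describe how to convert this enumeration of occurrences into an enumeration of distinct closed factors without blowing up the time. Preprocess $\STree(T)$ for constant-time $\WAQ$ queries in $O(n)$ time~\cite{BelazzouguiKPR21}. For each reported occurrence $(i,j)$, compute the locus of $T[i..j]$ in $\STree(T)$ via $\WAQ(\leaf_T(i), j-i+1)$ in $O(1)$ time, which returns the lowest explicit node at or below the locus; together with the string-depth $j-i+1$ this canonically identifies the implicit node representing $T[i..j]$. Attach to each edge a list of string-depths that have already been reported, and output $T[i..j]$ only when the current depth is new for that edge; to keep the bookkeeping linear per occurrence one can simply append to the list and deduplicate at the end, or use a flag on a per-edge hash table of size proportional to the edge length. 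Either way this costs $O(1)$ amortized per occurrence, so the distinct enumeration also runs in $O(n^2)$ time.

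\textbf{Lower bound.} For the matching $\Omega(n^2)$ lower bound it suffices to exhibit an infinite family of strings with $\Omega(n^2)$ distinct closed factors, which is precisely the existential half of the $\sim n^2/6$ result of Parshina and Puzynina~\cite{ParshinaP24} (and already $\Theta(n^2)$ in~\cite{BadkobehFL15}). On such strings the output size is $\Omega(n^2)$, so any enumeration algorithm, and in particular any algorithm listing all occurrences, must spend $\Omega(n^2)$ time in the worst case. Combined with the upper bound this gives the claimed $\Theta(n^2)$.

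\textbf{Main obstacle.} The one mildly delicate step is the deduplication for the distinct case, since two different occurrences of the same closed factor generically land on the same implicit locus on an edge of $\STree(T)$; I expect that the cleanest way to handle this without a $\log$ factor is to batch all reported (edge, depth) pairs and remove duplicates with a final radix sort in $O(n^2)$ time. Everything else is a direct application of the open-close array computation and constant-time weighted ancestor queries already cited in the paper.
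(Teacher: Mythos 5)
Your proposal is correct and follows essentially the same route as the paper: compute $\OC_{T[i..n]}$ for all $i$ in $\Theta(n^2)$ total time and map each reported occurrence to its locus in $\STree(T)$ by a constant-time weighted ancestor query, with the loci giving the distinct factors. The extra details you supply (deduplicating (edge,\ depth) pairs by a final radix sort, and the worst-case $\Omega(n^2)$ output bound from~\cite{BadkobehFL15,ParshinaP24} justifying the $\Theta$) are sound refinements of the same argument; only note that $\WAQ(\leaf_T(i), j-i+1)$ returns the \emph{highest} (nearest to the root) explicit node of string depth at least $j-i+1$, i.e., the first explicit node at or below the locus, not the lowest.
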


Next, we propose an alternative approach that can achieve subquadratic time when there are few closed factors to output.
Using the algorithms described in Sections~\ref{sec:online} and~\ref{sec:offline},
we can enumerate the ending positions and the \emph{longest borders} of distinct closed factors in $O(n + \out)$ time.
For each such a border $b$ of a closed factor,
we need to determine the starting position of the closed factor
by finding the nearest occurrence of $b$ to the left.
Such occurrences can be computed efficiently by utilizing
\emph{range predecessor queries} over the list of (the integer-labels of) the leaves of $\STree(T)$ as follows:
Let $b = T[s.. t]$ be a border of some closed factor $w$ where $b$ occurs exactly twice in $w$.
We first find the locus of $b$ in $\STree(T)$ by a WAQ.
Let $v$ be the nearest descendant of the locus, inclusive.
Then, the leaves in the subtree rooted at $v$ represent the occurrences of $b$ in $T$.
Thus the nearest occurrence of $b$ to the left equals
the predecessor value of $s$ within the leaves under $v$.
By applying Belazzougui and Puglisi's range predecessor data structure~\cite{BelazzouguiP16}, we obtain the following:
\begin{proposition}
  We can enumerate the distinct closed factors of $T$ in $O(n\sqrt{\log n} + \out\cdot \log^\epsilon n)$ time
  where $\out \in O(n^2)$ is the number of distinct closed factors in $T$ and
  $\epsilon$ is a fixed constant with $0 < \epsilon < 1$.
\end{proposition}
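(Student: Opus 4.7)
My plan is to combine the enumerating version of the Section~\ref{sec:offline} algorithm with range predecessor queries to recover the starting position of each distinct closed factor from its ending position and longest border.

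First, I would run the offline simulation of Section~\ref{sec:offline} in \emph{enumeration} mode: at each step $j$ it emits one pair per new closed factor. By Lemma~\ref{lem:num} and its proof, the new closed factors at step $j \in J$ correspond bijectively to the suffixes of $T[1.. j]$ whose lengths lie in the integer interval $(|\lrslrs(T[1.. j])|,\, |\lrs(T[1.. j])|]$, while for $j \notin J$ only $T[j]$ is new and its start is trivially $j$. Since both $|\lrs(T[1.. j])|$ and $|\lrslrs(T[1.. j])|$ are maintained by the simulation, iterating this integer interval produces $(j, |b|)$ pairs in $O(d_j)$ time per step, giving $O(n + \out)$ total, as already noted in the text preceding the proposition.

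Next, for each emitted pair --- which determines a border $b = T[s.. j]$ with $s = j - |b| + 1$ --- I would recover the starting position $q$ of the associated closed factor $w = T[q.. j]$ by computing $q = \max\{r \in \occ_T(b) \mid r < s\}$. The key point is the correctness of this identity: since $w$ is closed with longest border $b$ we have $\occ_w(b) = \{q, s\}$, and any $r \in \occ_T(b)$ with $q < r < s$ would satisfy $r + |b| - 1 < j$, placing a third occurrence of $b$ inside $w$, a contradiction; since $q \in \occ_T(b)$ witnesses existence, the predecessor is well defined. To answer the query I would equip $\STree(T)$ with the $O(1)$-time $\len_T$-weighted ancestor structure of~\cite{BelazzouguiKPR21} and, over the sequence of leaf labels in DFS order, the range-predecessor structure of Belazzougui and Puglisi~\cite{BelazzouguiP16}, which supports queries in $O(\log^\epsilon n)$ time after $O(n\sqrt{\log n})$-time preprocessing. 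A single call to $\WAQ_{\STree(T)}(\leaf_T(s), |b|)$ returns a node whose subtree corresponds to the contiguous subrange of the DFS leaf sequence spanning exactly $\occ_T(b)$, and one range-predecessor query on that subrange returns $q$.

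Adding the costs --- $O(n\sqrt{\log n})$ for preprocessing, $O(n + \out)$ for emitting pairs, and $O(\out \cdot \log^\epsilon n)$ for the predecessor queries --- yields the claimed bound. The main obstacle I anticipate is verifying the predecessor identity above; once it is in place, the remainder is a mechanical composition of the tools already cited, with no additional non-trivial data structure beyond Belazzougui--Puglisi.
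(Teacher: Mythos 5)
Your proposal is correct and takes essentially the same route as the paper: emit the (ending position, longest border) pairs in $O(n+\out)$ time via the counting algorithm of Sections~\ref{sec:online} and~\ref{sec:offline}, then recover each starting position as the predecessor occurrence of the border, found by a weighted ancestor query on $\STree(T)$ followed by a Belazzougui--Puglisi range predecessor query over the leaf labels. Your explicit verification that the nearest left occurrence of $b$ is indeed the start of the closed factor merely spells out a step the paper leaves implicit.
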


\section{Conclusions}\label{sec:conclusion}
In this paper, we proposed two algorithms for counting
distinct closed factors of a string.
The first algorithm runs in $O(n \log \sigma)$ time using $O(n)$ space
for a string given in an online manner.
The second algorithm runs in $O(n)$ time and space
for a static string over a linearly sortable alphabet.
Additionally, we discussed how to enumerate the distinct closed factors,
and showed a $\Theta(n^2)$-time algorithm using open-close-arrays, as well as
an $O(n\sqrt{\log n} + \out\cdot\log^\epsilon n)$-time algorithm that combines our counting algorithm with a range predecessor data structure.
Since there can be $\Omega(n^2)$ distinct closed factors in a string~\cite{BadkobehFL15,ParshinaP24},
the $\out\cdot\log^\epsilon n$ term can be superquadratic in the worst case.
This leads to the following open question:
For the enumerating problem, can we achieve $O(n\mathsf{polylog}(n) + \out)$ time, which is linear in the output size?

\bibliographystyle{plain}

\end{document}